\newcommand{\Oh}[1]
  {\ensuremath{\mathcal{O}\!\left( {#1} \right)}}
\newcommand{\LCS}
  {\ensuremath{\mathsf{LCS}}}
\newcommand{\BWD}
  {\ensuremath{\mathsf{BWD}}}
\newcommand{\BWT}
  {\ensuremath{\mathsf{BWT}}}
\newcommand{\rank}
  {\ensuremath{\mathsf{rank}}}
\newcommand{\select}
  {\ensuremath{\mathsf{select}}}
\long\def\ignore#1{\vskip 0pt}
\begin{document}

\title{Relative FM-indexes}

\ignore{\author{Djamal Bellazzougui\inst{1} \and Travis Gagie\inst{1} \and
Simon Gog\inst{2} Giovanni Manzini\inst{3} \and Jouni Sir\'en\inst{4}}
\institute{University of Helsinki \and University of Melbourne \and
University of Eastern Piedmont \and University of Chile}}

\maketitle

\begin{abstract}
Intuitively, if two strings $S_1$ and $S_2$ are sufficiently similar and we already have an FM-index for $S_1$ then, by storing a little extra information, we should be able to reuse parts of that index in an FM-index for $S_2$.  We formalize this intuition and show that it can lead to significant space savings in practice, as well as to some interesting theoretical problems.
\end{abstract}

\section{Introduction} \label{sec:introduction}

FM-indexes~\cite{FM05} are core components in most modern DNA aligners
(e.g.,~\cite{LTPS09,LD09,LYLLYKW09}) and have thus played an important role
in the genomics revolution.  Medical researchers are now producing databases
of hundreds or even thousands of human genomes, so bioinformatics researchers
are working to improve FM-indexes' compression of sets of nearly duplicate
strings.  As far as we know, however, the solutions proposed so far
(e.g.,~\cite{FGHP14,MNSV10}) index the concatenation of the genomes, so we
can search the whole database easily but searching only in one specified
genome is more difficult.  In this paper we consider how to index each of the
genomes individually while still using reasonable space and query time.

Our intuition is that if two strings $S_1$ and $S_2$ are sufficiently similar
and we already have an FM-index for $S_1$ then, by storing a little extra
information, we should be able to reuse parts of that index in an FM-index
for $S_2$. More specifically, it seems $S_1$'s and $S_2$'s Burrows-Wheelers
Transforms~\cite{BW94} (BWTs) should also be fairly similar. Since BWTs are
the main component of FM-indexes, it is natural to try to take advantage of
such similarity to build an index for $S_2$ that ``reuses'' information
already available in $S_1$'s FM-index.

Among the many possible similarities one can find and exploit in the BWTs, in
this paper we consider the longest common subsequence (LCS). The BWT sorts
the characters of a string into the lexicographic order of the suffixes
following those characters.  For example, if
$$
S_1 = \mathsf{AAGTTGAGAGTGAGT},\qquad
S_2 = \mathsf{AGAGAGTCGAAGTT};
$$
then
$$
\BWT (S_1) = \mathsf{TGGGATTAAAAGTGG},\qquad
\BWT (S_2) = \mathsf{TGGGATCAAAATGG};
$$
whose LCS {\sf TGGGATAAAATGG} is nearly as long as either BWT. Note that in
this example $\LCS(S_1,S_2)=\mathsf{AGAGAGTGAGT}$ is shorter than $\LCS(\BWT
(S_1),\BWT (S_2))$.

We introduce the concept of {\em BW-distance} \(\BWD (S_1, S_2)\) between
$S_1$ and $S_2$ defined as $|S_1| + |S_2| - 2 |\LCS(\BWT (S_1),\BWT (S_2))|$.
Note that this coincides with the edit distance between $\BWT (S_1)$ and
$\BWT (S_2)$ when only insertions and deletions are allowed. We prove that,
if we are willing to tolerate a slight increase in query times, we can build
an index for $S_2$ using an unmodified FM-index for $S_1$ and additional data
structures whose total space in words is asymptotically bounded by $\BWD
(S_1, S_2)$ (Theorem~\ref{theo:counting}).

This first result is the starting point for our investigation as it generates
many challenging issues. First, since we are interested in indexing whole
genomes, we observe that finding the LCS of strings
whose length is of the order of billions is outside the capabilities of most
computers. Thus, in Section~\ref{sec:simon&jouni} we show how to approximate
the LCS of two BWTs, using combinatorial properties of
the BWT to align the sequences. In the same section we also discuss and test
several practical alternatives for building the index for $S_2$ given the one
for $S_1$ and we analyze their time/space trade-offs.

If one needs an index not only for counting queries but also for locating and
extracting, we must enrich it with suffix array (SA) samples. Such samples
usually take significantly less space than the main index. However, we may
still want to take advantage of the similarities between $S_1$ and $S_2$ to
``reuse'' SA samples from $S_1$ for $S_2$'s index. In
Section~\ref{sec:djamal&giovanni} we show that this is indeed possible if,
instead of considering the LCS between the BWTs, we use a common subsequence
with the additional constraint of being {\em BWT-invariant}
(Theorem~\ref{theo:locating}). This result motivates the problem of finding
the longest BWT-invariant subsequence, which unfortunately turns out to be
NP-hard (Theorem~\ref{theo:np}). We therefore devise a heuristic to find a
``long'' BWT-invariant subsequence in $\Oh{|S_1|\log|S_1|}$ time.

We have tested our approach in practice by building an FM-index for the genome
of a Han Chinese individual, ``reusing'' an FM-index of the human reference
genome. The Han genome is about 3.0 billion base pairs, the reference is
about 3.1 billion base pairs and we found a common subsequence of about 2.9
billion base pairs. A standard implementation of a stand-alone FM-index for
the Han genome takes 628~MB or 1090~MB, depending on encoding, while our
index uses only 256~MB or 288~MB on top of the index for the reference.  On
the other hand, queries to our index take about 9.5 or 4.5 times longer.
Since our index is compressed relative to the underlying index for the
reference, we call it a relative FM-index.

\ignore{In this paper we are mainly concerned with the practical applications
of our observation.  In Section~\ref{sec:simon&jouni} we describe how we
built an FM-index for the genome of a Han Chinese individual, reusing a
\rank\ data structure over the BWT of the human reference genome. The Han
genome is about 3.0 billion base pairs, the reference is about 3.1 billion
base pairs and we found a common subsequence of about 2.9 billion base pairs.
A standard implementation of a stand-alone FM-index for the Han genome takes
628~MB or 1090~MB, depending on encoding, while our index uses only 256~MB or
288~MB on top of the index for the reference.  On the other hand, queries to
our index take about 9.5 or 4.5 times longer.  Since our index is compressed
relative to the underlying index for the reference, we call it a relative
FM-index.

The \rank\ data structure over the BWT normally dominates the space usage of
an FM-index because the other components all have sublinear size.  If we
store enough relative FM-indexes and our space savings for the \rank\ data
structures are good enough, however, those other components could become a
concern.  In particular, the suffix array (SA) sample used for locating and
extracting usually takes an only slightly sublinear number of bits, unless
those operations are slow.  (Bitvectors with constant-time queries can also
take quite a lot of space, but that can be reduced by increasing the query
times.)  We have developed (but not yet tested) a way to reuse an SA sample
--- see the appendix for an overview --- but it involves finding a long
common subsequence in \(\BWT (S_1)\) and \(\BWT (S_2)\) such that the
relative order of its characters in $S_1$ and $S_2$ is the same.  We show in
Section~\ref{sec:djamal&giovanni} that finding the longest such subsequence
is NP-hard, but we also give heuristics that seem to work well in practice.

}

\section{Review of the FM-index structure} \label{sec:review}

The core component of an FM-index for a string \(S [1..n]\) is a data
structure supporting rank queries on the Burrows-Wheeler
Transform \(\BWT (S)\) of $S$.  This transform permutes the
characters in $S$ such that \(S [i]\) comes before \(S [j]\) in \(\BWT (S)\)
if \(S [i + 1..n]\) is lexicographically less than \(S [j + 1..n]\).

If the lexicographic range of suffixes of $S$ starting with $\beta$ is
\([i..j]\), then the range of suffixes starting with \(a \beta\) is
\begin{multline*}
\left[ \BWT (S).\rank_a (i - 1) + 1 + \sum_{a' \prec a} S.\rank_{a'} (n).. \right.\\[-2ex]
\vspace{-1ex} \left. \BWT (S).\rank_a (j) + \sum_{a' \prec a} S.\rank_{a'} (n) \right]
\end{multline*}
It follows that, if we have precomputed an array storing \(\sum_{a' \prec a}
S.\rank_{a'} (n)\) for each distinct character $a$ (i.e., the number of
characters in $S$ less than $a$), then we can find the range of suffixes
starting with a pattern \(P [1..m]\) --- and, thus, count its occurrences ---
using $\Oh{m}$ rank queries.

If the position of \(S [i]\) in \(\BWT (S)\) is $j$, then the position of \(S
[i - 1]\) is
\[\BWT(S).\rank_{S [i]} (j) + \sum_{a \prec S [i]} \BWT (S).\rank_{a} (n)\,.\]
It follows that, if we have also precomputed a dictionary storing the
position of every $r$th character of $S$ in \(\BWT (S)\) with its position in
$S$ as satellite information, then we can find a character's position in $S$
from its position in \(\BWT (S)\) using $\Oh{r}$ rank and membership queries.
Therefore, once we know the lexicographic range of suffixes starting with
$P$, we can locate each of its occurrences using $\Oh{r}$ rank queries.

Finally, if we have also precomputed an array storing the position of every
$r$th character of $S$ in \(\BWT (S)\), in order of appearance in $S$, then
given $i$ and $j$, we can extract \(S [i..j]\) using $\Oh{r + j - i}$ rank
queries.

% \BWT(S_1)[\alpha_1]=\BWT(S_2)[\beta_1],\;\ldots,\,\BWT(S_1)[\alpha_m]=\BWT(S_2)[\beta_m]

\section{BW-distance and relative FM-indices} \label{sec:BWdiv}

Given two strings \(S_1 [1..n_1]\) and \(S_2 [1..n_2]\) we define the {\em
BW-distance\/} \(\BWD (S_1, S_2)\) between $S_1$ and $S_2$ as
\begin{equation}\label{eq:BWDdef}
\BWD(S_1, S_2) = n_1 + n_2 - 2 |\LCS(\BWT (S_1),\BWT (S_2))|.
\end{equation}
Note that the BW-distance is nothing but the edit distance between
$\BWT(S_1)$ and $\BWT(S_2)$ when only insertions and deletions are
allowed~\cite{Myers86} (also known as the shortest edit script or indel distance),
and is thus at most twice their normal edit distance. We now show
how to support counting queries on $S_2$ using an FM-index for $S_1$ and some
auxiliary data structures taking $\Oh{\BWD (S_1, S_2)}$ words of space.
Specifically, we consider how we can support rank queries on \(\BWT (S_2)\)
and partial-sum queries on the distinct characters' frequencies.

Let $C$ denote a LCS of $\BWT (S_1)$ and $\BWT(S_2)$ with $|C|=m$. Let $C=c_1
\cdots c_m$, and for $i=1,\ldots,m$, let $\alpha_i$ (resp. $\beta_i$) be the
position of $c_i$ in $\BWT(S_1)$ (resp. $\BWT(S_s)$) with $\alpha_1 < \cdots
< \alpha_{m}$ (resp. $\beta_1 < \cdots < \beta_{m})$. Define
\begin{itemize}
\item bitvector \(B_1 [1..n_1]\) with 0s in positions  $\alpha_1, \ldots,
    \alpha_m$,
\item bitvector \(B_2 [1..n_2]\) with 0s in positions of $\beta_1,
    \ldots, \beta_m$,
\item subsequence $D_1$ of \(\BWT (S_1)\) marked by 1s in $B_1$; $D_1$ is
    the complement of $C$ in $\BWT (S_1)$,
\item subsequence $D_2$ of \(\BWT (S_2)\) marked by 1s in $B_2$; $D_2$ is
    the complement of $C$ in $\BWT (S_2)$.
\end{itemize}

We claim that if we can support fast \rank\ queries on \(\BWT (S_1)\), $B_1$,
$B_2$, $D_1$ and $D_2$ and fast $\select_0$ queries on $B_1$, then we can
support fast \rank\ queries on \(\BWT (S_2)\). To see why, notice that
\begin{align*}
\BWT (S_2).\rank_X (i)
& = C.\rank_X (B_2.\rank_0 (i))\\
& \quad + D_2.\rank_X (B_2.\rank_1 (i))
\end{align*}
and, by the same reasoning,
\begin{align*}
C.\rank_X (j)
& = \BWT (S_1).\rank_X (B_1.\select_0 (j))\\
& \quad - D_1.\rank_X (B_1.\rank_1 (B_1.\select_0 (j)))\,.
\end{align*}
Therefore,
\begin{align*}
\BWT (S_2).\rank_X (i)
& = \BWT (S_1).\rank_X (k)\\
& \quad - D_1.\rank_X (B_1.\rank_1 (k))\\
& \quad + D_2.\rank_X (B_2.\rank_1 (i))
\end{align*}
where \(k = B_1.\select_0 (B_2.\rank_0 (i))\).

For example, for the strings
$$
S_1 = \mathsf{AAGTTGAGAGTGAGT},\qquad
S_2 = \mathsf{AGAGAGTCGAAGTT};
$$
it is
$$
\BWT (S_1) = \mathsf{TGGGATTAAAAGTGG},\qquad
\BWT (S_2) = \mathsf{TGGGATCAAAATGG};
$$
and $\LCS(\BWT (S_1),\BWT (S_2)) = \mathsf{TCTCGTAAAAGG}$. Hence
\begin{align*}
B_1  & = 0001000000000111  & D_1  &= \mathsf{GTGC}\\
B_2  & = 010000000001010   & D_2  & = \mathsf{GCC}.
\end{align*}
Suppose we want to compute $\BWT (S_2).\rank_\mathsf{C}$. It is
\(B_1.\select_0 (B_2.\rank_0 (13)) = 12\), so
\begin{align*}
\BWT (S_2).\rank_\mathsf{C} (13)
& = \BWT (S_1).\rank_\mathsf{C} (12)\; -\; D_1.\rank_\mathsf{C} (B_1.\rank_1 (12))
\\
& \quad + D_2.\rank_\mathsf{C} (B_2.\rank_1 (13)) \;=\; 3.
\end{align*}

Observing that the number of 1s in $B_1$ and $B_2$ is \(\Oh{\max (n_1, n_2) -
\ell} = \Oh{\BWD (S_1, S_2)}\), we can store data structures for $B_1$,
$B_2$, $D_1$ and $D_2$ in $\Oh{\BWD (S_1, S_2)}$ space such that the desired
\rank/\select\ queries take $\Oh{\log \BWD (S_1, S_2)}$ time.

The only other component required for an FM-index for $S_2$ for counting, is a data
structure for computing  \(\sum_{a' \prec a} S_2.\rank_{a'} (n)\) for each
distinct character~$a$. Notice that \(\BWD (S_1, S_2)\) is at least the
number of distinct characters whose frequencies in $S_1$ and $S_2$ differ. It
follows that in $\Oh{\BWD (S_1, S_2)}$ space we can store
\begin{itemize}
\item a $\Oh{\log \BWD (S_1, S_2)}$-time predecessor data structure
    storing those distinct characters,
\item an array storing \(\sum_{a' \prec a} S_2.\rank_{a'} (n_2)\) for
    each such distinct character $a$.
\end{itemize}
For any distinct character $b$, we can find the preceding distinct character
$a$ whose frequencies in $S_1$ and $S_2$ differ and compute
\[\sum_{a' \prec b} S_2.\rank_{a'} (n_2) = \sum_{a' \prec b} S_1.\rank_{a'} (n_1) - \sum_{a' \prec a} S_1.\rank_{a'} (n_1) + \sum_{a' \prec a} S_2.\rank_{a'} (n_2)\]
using $\Oh{\log \BWD (S_1, S_2)}$ time. Summing up:

\begin{theorem} \label{theo:counting}
If we already have an FM-index for $S_1$, we can store a relative FM-index
for $S_2$ using $\Oh{\BWD (S_1, S_2)}$ words of extra space. Counting queries
on the relative FM-index take time an $\Oh{\log \BWD (S_1, S_2)}$ factor
larger than on $S_1$.
\end{theorem}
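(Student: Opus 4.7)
The plan is to instantiate the four auxiliary structures already sketched in this section---the bitvectors $B_1, B_2$ marking LCS positions in the two BWTs and the complementary subsequences $D_1, D_2$---and then verify three things: (i) the rank identity derived above correctly computes $\BWT(S_2).\rank_X(i)$ from the FM-index of $S_1$ together with these auxiliaries, (ii) all four auxiliaries, equipped with the needed \rank\ and $\select_0$ support, fit in $\Oh{\BWD(S_1, S_2)}$ words with $\Oh{\log \BWD(S_1, S_2)}$-time queries, and (iii) the cumulative-frequency table for $S_2$ can be simulated from the one for $S_1$ within the same budget.

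For (i) I would simply reuse the displayed derivation: one rank on $\BWT(S_1)$, a constant number of rank and $\select_0$ operations on $B_1, B_2, D_1, D_2$, and one subtraction suffice to recover $\BWT(S_2).\rank_X(i)$. For (ii), the key estimate is that $B_1$ has $n_1 - |C|$ ones and $B_2$ has $n_2 - |C|$ ones, so by~(\ref{eq:BWDdef}) both counts, along with $|D_1|$ and $|D_2|$, are $\Oh{\BWD(S_1, S_2)}$; standard sparse bitvector representations for $B_1, B_2$ and a wavelet tree (after alphabet remapping) for $D_1, D_2$ then deliver the stated space and query-time bounds, as already announced in the running text.

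For (iii), I would exploit the observation that any character whose multiplicity differs between $S_1$ and $S_2$ must line up with an insertion or a deletion in the alignment induced by $C$, so the number of such ``changed'' characters is at most $\BWD(S_1, S_2)$. Storing a predecessor dictionary over just these characters together with their $S_2$-cumulative frequencies lets me recover $\sum_{a' \prec b} S_2.\rank_{a'}(n_2)$ for arbitrary $b$ by locating the largest changed character $a \prec b$ (or $a = b$) and adjusting the known $S_1$-cumulative frequency by the stored offset, in $\Oh{\log \BWD(S_1,S_2)}$ time. Assembling the pieces, each step of the standard FM-index backward-search loop issues $\Oh{1}$ queries to $S_1$'s FM-index plus $\Oh{\log \BWD(S_1, S_2)}$ extra work, yielding the claimed slowdown factor. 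The only delicate point I foresee is ensuring that the \rank-enabled representation of $D_1, D_2$ really fits in $\Oh{\BWD(S_1, S_2)}$ words even when the underlying alphabet is large; alphabet remapping through the same ``changed-character'' predecessor structure is what makes this go through cleanly.
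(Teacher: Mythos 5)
Your proposal follows the paper's own argument essentially step for step: the same rank identity over $B_1$, $B_2$, $D_1$, $D_2$ and $\BWT(S_1)$, the same observation that the number of 1s in $B_1$ and $B_2$ is $\Oh{\BWD(S_1,S_2)}$, and the same predecessor structure over the characters whose frequencies differ to simulate the partial-sum array for $S_2$. It is correct; your closing remark about alphabet remapping for $D_1$ and $D_2$ is a reasonable refinement the paper leaves implicit, but it does not change the approach.
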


In Section~\ref{sec:djamal&giovanni} we show how to build a relative FMindex
supporting also locating and extracting.

% FIXME The standard analysis on the effect of edit operations on the BWT of a repetitive collection
% applies here too. See Section 2.

\subsection{A practical implementation} \label{sec:simon&jouni}

A longest common sequence of $\BWT (S_1)$ and $\BWT (S_2)$ can be computed in
$\Oh{n_1 n_2 /w}$ time, where $w$ is the word size~\cite{Myers99}. Since we
are mainly interested in strings with a small BW-distance, a better
alternative could be the algorithms whose running times are bounded by the
number of differences between the input sequences (see
eg~\cite{LandauVN86,Myers86}). Unfortunately none of these algorithms is
really practical when working with such very large files as the complete
genomes we considered in our tests. Hence, to make the construction of a
relative FM-index practical, we approximate the LCS of the two
Burrows-Wheeler transforms, using the combinatorial properties of the BWT to
align the sequences.

Let $S_{1}$ be a random string of length $n$ over alphabet $\Sigma$ of size
$\sigma$, and let string $S_{2}$ differ from it by $s$ insertions, deletions,
and substitutions. In the expected case, the edit operations move $O(s
\log_{\sigma} n)$ suffixes in lexicographic order, and change the preceding
characters for $O(s)$ suffixes~\cite{MNSV10}. If we remove the characters
corresponding to those suffixes from $\BWT(S_{1})$ and $\BWT(S_{2})$, we have
a common subsequence of length $n - O(s \log_{\sigma} n)$ in the expected
case.

Assume that we have partitioned the BWTs according to the first $k$
characters of the suffixes, for $k \ge 0$. For all $x \in \Sigma^{k}$, let
$\BWT_{x}(S_{1})$ and $\BWT_{x}(S_{2})$ be the substrings of the BWTs
corresponding to the suffixes starting with $x$. If we remove the suffixes
affected by the edit operations, as well as the suffixes where string $x$
covers an edit, we have a common subsequence $\BWT_{x}'$ of $\BWT_{x}(S_{1})$
and $\BWT_{x}(S_{2})$. If we concatenate the sequences $\BWT_{x}'$ for all
$x$, we get a common subsequence of $\BWT(S_{1})$ and $\BWT(S_{2})$ of length
$n - O(s (k + \log_{\sigma} n))$ in the expected case. This suggests that we
can find a long common subsequence of $\BWT(S_{1})$ and $\BWT(S_{2})$ by
partitioning the BWTs, finding an LCS for each partition, and concatenating the
results.

In practice, we partition the BWTs by variable-length strings. We use
backward searching on the BWTs to traverse the suffix trees of $S_{1}$ and
$S_{2}$, selecting a partition when either the length of $\BWT_{x}(S_{1})$ or
$\BWT_{x}(S_{2})$ is at most $1024$, or the length of the pattern $x$ reaches
$32$. For each partition, we use the greedy LCS algorithm~\cite{Myers86} to
find the longest common subsequence of that partition. To avoid hard cases,
we stop the greedy algorithm if it would need diagonals beyond $\pm 50000$,
and match only the most common characters for that partition. We also predict
in advance the common cases where this happens (the difference of the lengths
of $\BWT_{x}(S_{1})$ and $\BWT_{x}(S_{2})$ is over $50000$, or $x = N^{32}$
for DNA sequences), and match the most common characters in that partition
directly.

We implemented the counting structure of the relative FM-index using the SDSL
library~\cite{Gog2014b}, and compared its performance to a regular FM-index.
To encode the BWTs and sequences $D_{1}$ and $D_{2}$, we used Huffman-shaped
wavelet trees with either plain or entropy-compressed (RRR)~\cite{Raman2007}
bitvectors. We chose entropy-compressed bitvectors for marking the positions
of the LCS in $\BWT(S_{1})$ and $\BWT(S_{2})$.

The implementation was written in C++ and compiled on g++ version 4.7.3. We
used a system with 32 gigabyes of memory and two quad-core 2.53 GHz Intel
Xeon E5540 processors, running Ubuntu 12.04 with Linux kernel 3.2.0. Only one
CPU core was used in the experiments.

For our experiments, we used the 1000 Genomes Project assembly of the human
reference genome as the reference sequence $S_{1}$.\footnote{GRCh37,
\url{ftp://ftp-trace.ncbi.nih.gov/1000genomes/ftp/technical/reference/}} As
sequence $S_{2}$, we used the genome of a Han Chinese individual from the
YanHuang
project.\footnote{\url{ftp://public.genomics.org.cn/BGI/yanhuang/fa/}} The
lengths of the sequences were 3.10 billion bases and 3.00 billion bases,
respectively, and our algorithm found a common subsequence of 2.93 billion
bases. As our pattern set, we used 10 million reads of length 56. Almost 4.20
million reads had exact matches in sequence $S_{2}$, with a total of 99.7
million occurrences. The results of the experiments can be seen in
Table~\ref{table:experiments}.

\begin{table}[t]
\centering
\caption{Experiments with human genomes. Bitvector used in the wavelet tree; time and space requirements for building the relative FM-index; time required for counting queries and index size for a regular and a relative FM-index; the performance of the relative FM-index compared to the regular index. The query times are averages over five runs.}\label{table:experiments}
\begin{tabular}{ccccccccccccc}
\hline
\noalign{\smallskip}
 &\phantom{00} & \multicolumn{2}{c}{Construction} &\phantom{00} & \multicolumn{2}{c}{Regular}
 &\phantom{00} & \multicolumn{2}{c}{Relative} &\phantom{00} & \multicolumn{2}{c}{Rel vs.~Reg} \\
Bitvector & & Time & Space & & Time & Size & & Time & Size & & Time & Size \\
\noalign{\smallskip}
\hline
\noalign{\smallskip}
Plain & &  762\:s & 9124\:MB & & 146\:s & 1090\:MB & & 1392\:s & 288\:MB & & 954\% & \ 26\% \\
\noalign{\smallskip}
RRR   & & 6022\:s & 7823\:MB & & 667\:s &  628\:MB & & 3022\:s & 256\:MB & & 453\% & \ 41\% \\
\noalign{\smallskip}
\hline
\end{tabular}
\end{table}

% FIXME Should we mention that it is easy to extract the regular index from the relative index?
With plain bitvectors in the wavelet tree, the relative FM-index was 9.5 times slower than a regular FM-index, while requiring a quarter of the space. With entropy-compressed bitvectors, the relative index was 4.5 times slower and required 41\% of the space. Comparing the relative FM-index using plain bitvectors to the regular index using entropy-compressed bitvectors, we see that the relative index is 2.1 times slower, while taking 46\% of the space.

% FIXME We have numbers. Do we want to show them?
Bitvectors $B_{1}$ and $B_{2}$ took 70\% to 80\% of the total size of the
relative index. We tried to encode them as sparse
bitvectors~\cite{Okanohara2007}, but the result was slightly larger and
clearly slower than with entropy-compressed bitvectors. By our estimates,
run-length encoded bitvectors would have taken slightly more space than
sparse vectors. Hybrid bitvectors using different encodings for different
parts of the bitvector~\cite{Kaerkkaeinen2014} could improve compression, but
the existing implementation does not work with vectors longer than $2^{31}$
bits.

\section{Relative FM-indices supporting locating and extracting}
\label{sec:djamal&giovanni}

As mentioned in Section~\ref{sec:review}, an FM-index for $S_1$ usually has
an SA sample that takes an only slightly sublinear number of bits.  This
sample has two parts: the first consists of a bitvector $R$ with 1s marking
the positions in \(\BWT (S_1)\) of every $r$th character in $S_1$, and an
array $A$ storing a mapping from the ranks of those characters' positions in
\(\BWT (S_1)\) to their positions in $S_1$; the second is an array storing a
mapping from the ranks of those characters' positions in $S$ to their
positions in \(\BWT (S_1)\).  With these, given the position of a sampled
character in \(\BWT (S_1)\), we can find its position in $S_1$, and vice
versa.

These parts are used for locating and extracting queries, respectively, and
the worst-case query times are proportional to $r$.  On the other hand, the
size of the sample in words is proportional to the length of $S$ divided by
$r$.  For details on how the sample works, we direct the reader to the full
description of FM-indexes~\cite{FM05}.  We note only that if we sample
irregularly, then the worst-case query times for locating and extracting are
proportional to the maximum distance in $S$ between two consecutive sampled
characters.  We leave consideration of extracting for the full version of the
paper --- it is nearly symmetric to locating --- so we do not discuss the
second part of the sample here.

Let
\(G = S_1[i_1]\;\cdots,\;S_1[i_\ell]\)
denote a length-$\ell$ common subsequence of $S_1$ and $S_2$ (not their
BWTs). That is, we have $i_1< \;\cdots\; < i_\ell$ and there exists $j_1 <
\;\cdots\; < j_\ell$ such that
$$
S_1[i_1]=S_2[j_1],\;\ldots,\;S_1[i_\ell]=S_2[j_\ell].
$$
Since there is a one-to-one correspondence between a text and its BWT, we can
define the indexes $v_1, \ldots, v_\ell$ (resp. $w_1,\ldots,w_\ell$) such
that for $k=1,\ldots,\ell$, $\BWT(S_1)[v_k]$ is the character corresponding
to $S_1[i_k]$ (resp. $\BWT(S_2)[w_k]$ is the character corresponding to
$S_2[j_k]$). We say that the common subsequence $G$ is {\em BWT-invariant} if
there exists a permutation $\pi: \{1,\ldots,\ell\} \rightarrow
\{1,\ldots,\ell\}$ such that we have simultaneously
\begin{equation}\label{eq:bwtinv}
v_{\pi(1)} < v_{\pi(2)} < \cdots < v_{\pi(\ell)}, \quad\mbox{and}\quad
w_{\pi(1)} < w_{\pi(2)} < \cdots < w_{\pi(\ell)}.
\end{equation}
In other words, when we go from the texts to the BWTs the elements of $G$ are
permuted in the same way in $S_1$ and $S_2$.

\ignore{
\begin{defintion}\label{def:G}
Given the indexes $i_1,\ldots,i_\ell$, $j_1,\ldots,j_\ell$,
$v_1,\ldots,v_\ell$ and $w_1,\ldots,w_\ell$ defined as above, \qed
\end{definition}}

An immediate consequence of~\eqref{eq:bwtinv} is that the sequence
$$
G' = \BWT(S_1)[v_{\pi(1)}]\,\BWT(S_1)[v_{\pi(2)}]\,\cdots\,\BWT(S_1)[v_{\pi(\ell)}]
$$
is a common subsequence of $\BWT(S_1)$ and $\BWT(S_2)$. We can therefore
generalize~\eqref{eq:BWDdef} and define
$$
\BWD_G(S_1, S_2) = \max(n_1,n_2) - |G|
$$
and repeat the construction of Theorem~\ref{theo:counting} with $\BWD$
replaced by $\BWD_G$. However, since $G$ is BWT-invariant it is now possible
to reuse the the SA samples from $S_1$ relative to positions in $G$ for the
string $S_2$ provided that we have
\begin{itemize}
\item bitvector \(M_1 [1..n_1]\) with 0s in positions  $i_1, \ldots, i_\ell$,
    supporting fast \rank\ queries,
\item bitvector \(M_2 [1..n_2]\) with 0s in positions of $j_1, \ldots,
    j_\ell$, supporting fast $\select_0$ queries;
\end{itemize}
proof idea in the appendix, complete proof in the full paper. Summing up,
we have:

\begin{theorem} \label{theo:locating}
For any BWT-invariant subsequence $G$, if we already have an FM-index for
$S_1$, then we can store $\Oh{\BWD_G (S_1, S_2)}$ extra space such that the
time bounds for locating and extracting queries on $S_2$ are an $\Oh{\log
\BWD_G (S_1, S_2)}$ factor larger than on $S_1$.
\end{theorem}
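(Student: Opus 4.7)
The plan is in three steps. First, treating the sequence $G'$ defined in the text (the length-$\ell$ common subsequence of the BWTs induced by $G$) as a common subsequence of $\BWT(S_1)$ and $\BWT(S_2)$, I would invoke verbatim the construction of Theorem~\ref{theo:counting}. Since $\max(n_1,n_2) - \ell = \BWD_G(S_1, S_2)$, the auxiliary bitvectors $B_1, B_2$ and residual sequences $D_1, D_2$ occupy $\Oh{\BWD_G(S_1,S_2)}$ words and support rank queries on $\BWT(S_2)$ at an $\Oh{\log \BWD_G(S_1,S_2)}$ slowdown; in particular, LF-stepping on $\BWT(S_2)$ is available at that cost, which handles the counting side of the theorem.

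Second, I would reuse the first part of $S_1$'s SA sample. Suppose during locating we reach a position $p$ of $\BWT(S_2)$ with $B_2[p] = 0$; then $p = w_{\pi(k)}$ for some $k$, and BWT-invariance identifies it with position $v_{\pi(k)}$ in $\BWT(S_1)$, whose text occurrences are at $S_1[i_{\pi(k)}]$ and $S_2[j_{\pi(k)}]$. If $S_1$'s sample already stores the pair $(v_{\pi(k)}, i_{\pi(k)})$, we recover the desired $S_2$-position as
\[
j_{\pi(k)} \;=\; M_2.\select_0\bigl(M_1.\rank_0(i_{\pi(k)})\bigr),
\]
in $\Oh{\log \BWD_G(S_1,S_2)}$ time, using $M_1$ and $M_2$, each of size $\Oh{\BWD_G(S_1,S_2)}$. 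Extracting is handled symmetrically using the second part of $S_1$'s sample, deferred to the full paper.

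Third, I must ensure that from any starting position of $\BWT(S_2)$ one reaches a usable sample within $\Oh{r}$ LF-steps, where $r$ is $S_1$'s sampling rate. For every position $p$ with $B_2[p] = 1$ I would store an explicit SA entry, costing at most $\BWD_G(S_1,S_2)$ extra words. For $G'$-positions whose $S_1$-counterpart is unsampled and too distant from any other usable sample, I would add further explicit entries. The main obstacle I foresee is bounding the total cost of these extras: since consecutive $G'$-positions in $\BWT(S_2)$ map under $\pi$ to a permuted and generally noncontiguous subset of $G'$-positions in $\BWT(S_1)$, showing that $\Oh{\BWD_G(S_1,S_2)}$ extras suffice requires a careful accounting. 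I would pivot the argument on the observation that the two BWTs disagree on at most $\BWD_G(S_1,S_2)$ positions, so long aligned runs are already adequately covered by the portion of $S_1$'s sample they share, while any additional sampling is confined to misaligned neighborhoods whose total size is already charged to the $B_2 = 1$ count.
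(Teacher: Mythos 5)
Your proposal follows essentially the same route as the paper's appendix sketch: reuse Theorem~\ref{theo:counting}'s structures for rank on $\BWT(S_2)$, exploit BWT-invariance so that the $t$-th marked position of $\BWT(S_2)$ corresponds to the $t$-th marked position of $\BWT(S_1)$, and translate the sampled $S_1$-position to an $S_2$-position via $M_2.\select_0(M_1.\rank_0(\cdot))$ --- exactly the composition the paper gives. Your third step correctly flags the sample-density issue (bounding the maximum gap between usable samples in $S_2$), which the paper itself acknowledges only implicitly via its remark on irregular sampling and defers to the full version, and your charging argument is the right way to resolve it within the $\Oh{\BWD_G(S_1,S_2)}$ budget.
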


In view of the above theorem, it is certainly desirable to find the longest
common subsequence of $S_1$ and $S_2$ which is BWT-invariant. Unfortunately,
this problem is NP-hard as shown by the following result.

\begin{theorem}\label{theo:np}
It is NP-complete to determine whether there is an LCS of $S_1$ and $S_2$
which is BWT-invariant, even when the strings are over a ternary alphabet.
\end{theorem}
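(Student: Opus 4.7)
My plan is to show membership in NP by a routine guess-and-verify argument, and then to prove hardness by reducing from the longest common subsequence problem on three binary sequences (3-LCS), which is known to be NP-hard. The structural observation that drives the reduction is that a BWT-invariant common subsequence of $S_1$ and $S_2$ must simultaneously respect \emph{three} orderings of the matched positions: the text order in $S_1$, the text order in $S_2$, and a single shared BWT order (since BWT-invariance forces the induced BWT permutations in the two strings to coincide). For NP-membership it suffices to guess the matched positions $i_1 < \cdots < i_\ell$ in $S_1$, $j_1 < \cdots < j_\ell$ in $S_2$, and a permutation $\pi$, then verify in polynomial time that $S_1[i_k] = S_2[j_k]$ for every $k$, that $\ell$ equals the standard LCS length of $S_1,S_2$ (which is computable in $\Oh{n_1 n_2}$ time by dynamic programming), and that $\pi$ simultaneously sorts the two BWT-position sequences.

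For hardness, given an instance $T_1, T_2, T_3$ of 3-LCS over $\{0,1\}$, I would construct $S_1$ and $S_2$ over $\{0,1,\#\}$ as follows. Each admissible triple $(i,j,k)$ with $T_1[i]=T_2[j]=T_3[k]$ contributes one ``candidate'' character (equal to $T_1[i]$) to $S_1$, placed so that the candidates appear in text order of $i$, and similarly one candidate in $S_2$ appearing in text order of $j$. Each candidate is immediately followed by a uniquely encoded ``tag tail'' over $\{0,1,\#\}$ whose sole function is to determine the candidate's position in $\BWT(S_1)$ and $\BWT(S_2)$; the tails are made identical in $S_1$ and $S_2$ for a given triple, so the shared BWT-rank of a candidate is well-defined, and tags are assigned so that this shared BWT-order of the candidates coincides with the text order of $T_3$. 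Padding the non-candidate regions with distinctive $\#$-blocks forces every optimal LCS to be carried by candidate matches, so a BWT-invariant LCS of maximum length corresponds exactly to a longest sequence of triples that is strictly increasing in $i$, $j$, and $k$ simultaneously, i.e., a 3-LCS of $T_1, T_2, T_3$.

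The main technical obstacle is the ternary alphabet restriction. With a larger alphabet each tag could be a single fresh symbol and inter-tag interference would be trivially impossible, but over $\{0,1,\#\}$ tags must have length $\Omega(\log N)$ and could contribute spuriously to $|\LCS(S_1, S_2)|$ either by sharing long common subsequences with other tags or by letting candidates from different triples (carrying the same character) be profitably paired with one another. The reduction therefore hinges on choosing tag codes and padding that are simultaneously (i) lexicographically ordered so as to realize $T_3$'s order in both $\BWT(S_1)$ and $\BWT(S_2)$, (ii) pairwise LCS-sparse so that unrelated tag tails contribute only negligibly to any common subsequence, and (iii) strictly less rewarding to pair across triples than within a triple. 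Designing such LCS-sparse ternary tag codes, and then proving that no spurious BWT-invariant common subsequence can reach the optimal length without encoding a 3-LCS, is where I expect the bulk of the technical work to lie.
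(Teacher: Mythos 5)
Your NP-membership argument is fine (and essentially matches the paper's, which just notes that the property of a given subsequence can be checked in polynomial time). The hardness reduction, however, is built on a false premise: the longest common subsequence of \emph{three} sequences is not NP-hard. The multiple-sequence LCS problem is NP-hard only when the number of input sequences is part of the input; for any fixed number $k$ of sequences it is solved by dynamic programming over the $k$-dimensional grid in $\Oh{n_1 n_2 n_3}$ time for $k=3$. So even if your tag-tail gadgets worked flawlessly, reducing \emph{from} 3-LCS would prove nothing about NP-hardness. Your structural intuition --- that a BWT-invariant common subsequence must simultaneously respect the two text orders and a shared BWT order --- is correct, but ``three orders must be respected'' does not by itself yield hardness; it matters what the third order looks like. (There is also a secondary issue: the LCS-sparse ternary tag codes, and the proof that no spurious common subsequence beats the intended ones, are exactly the part you defer, so the reduction is not actually carried out.)

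The paper gets hardness from a genuinely NP-complete source problem, permutation pattern matching \cite{BBL98}: given permutations $\pi_1$ on $n$ elements and $\pi_2$ on $m \leq n$ elements, decide whether $\pi_1$ contains a length-$m$ subsequence order-isomorphic to $\pi_2$. This is NP-complete precisely because the pattern $\pi_2$ can be as long as the text --- which is what your 3-LCS framing misses. The construction is also far simpler than tag tails: take
$S_1 = \mathsf{A B^{\pi_1[1]} A B^{\pi_1[2]} \cdots A B^{\pi_1[n]}}$ and
$S_2 = \mathsf{A C^{\pi_2[1]} A C^{\pi_2[2]} \cdots A C^{\pi_2[m]}}$
over the ternary alphabet $\{\mathsf{A},\mathsf{B},\mathsf{C}\}$. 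Since $\mathsf{B}$ and $\mathsf{C}$ never match, the unique LCS is $\mathsf{A}^m$, and the BWT sorts the $\mathsf{A}$'s of $S_2$ according to $\pi_2$ and any $m$ chosen $\mathsf{A}$'s of $S_1$ according to the corresponding subsequence of $\pi_1$; hence a BWT-invariant LCS exists iff $\pi_1$ contains the pattern $\pi_2$. If you want to repair your writeup, the fix is to replace the source problem, at which point the elaborate ternary coding machinery becomes unnecessary.
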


\begin{proof}
Clearly we can check in polynomial time whether a given subsequence of $S_1$
and $S_2$ has this property, so the problem is in NP.  To show that it is
NP-complete, we reduce from the NP-complete problem of permutation pattern
matching~\cite{BBL98}, for which we are given two permutations $\pi_1$ and
$\pi_2$ over $n$ and \(m \leq n\) elements, respectively, and asked to
determine whether there is a subsequence of $\pi_1$ of length $m$ such that
the relative order of the elements in that subsequence is the same as the
relative order of the elements in $\pi_2$.  For example, if \(\pi_1 = 6, 3,
2, 1, 4, 5\) and \(\pi_2 = 4, 2, 1, 3\), then \(6, 2, 1, 4\) is such a
subsequence. Specifically, we set
\begin{align*}
S_1 & = \mathsf{A B^{\pi_1 [1]} A B^{\pi_1 [2]} \cdots A B^{\pi_1 [n]}}\\
S_2 & = \mathsf{A C^{\pi_2 [1]} A C^{\pi_2 [2]} \cdots A C^{\pi_2 [m]}}\,,
\end{align*}
so the unique LCS of $S_1$ and $S_2$ is $\mathsf{A}^m$.  For our example,
\begin{align*}
S_1 & = \mathsf{A B^6 A B^3 A B^2 A B A B^5} = \mathsf{A B B B B B B A B B B A B B A B A B B B B B}\\
S_2 & = \mathsf{A C^4 A C^2 A C A C^3} = \mathsf{A C C C C A C C A C A C C C}\,.
\end{align*}
The BWT sorts the $m$ copies of {\sf A} in $S_2$ according to $\pi_2$ and
sorts any subsequence of $m$ copies of {\sf A} in $S_1$ according to the
corresponding subsequence of $\pi_1$.  Therefore, there is an LCS of $S_1$
and $S_2$ such that the relative order of its characters is \(\BWT (S_1)\)
and \(\BWT (S_2)\) is the same, if and only if there is a subsequence of
$\pi_1$ of length $m$ such that the relative order of the elements in that
subsequence is the same as the relative order of the elements in $\pi_2$.\qed
\end{proof}

In view of the above result, for large inputs we cannot expect to find the
longest possible BWT-invariant subsequence, so, as for the LCS, we have
devised the following fast heuristic for computing a ``long'' BWT-invariant
subsequence.

We first compute the suffix array $SA_{12}$ for the concatenation $S_1\#S_2$
and we use it to define the array $A$ of size $n_1 \times 2$ as follows
\begin{itemize}
\item $A[i][1] = j\quad$ iff $S_1[i] = S_2[j]$ and suffix $S_2[j+1,n_2]$
    immediately follows suffix $S_1[i+1,n_1]$ in $SA_{12}$. If no such
    $j$ exists $A[i][1]$ is undefined.
\item $A[i][2] = j\quad$ iff $S_1[i] = S_2[j]$ and suffix $S_2[j+1,n_2]$
    is the lexicographically largest suffix of $S_2$ preceding suffix
    $S_1[i+1,n_1]$ in $SA_{12}$. In no such $j$ exists $A[i][2]$ is
    undefined.
\end{itemize}
Next, we compute the longest subsequence $1 \leq i_1 < i_2 < \cdots < i_\ell
\leq n_1$ such that there exist $b_1, \ldots, b_\ell$, with $b_k \in \{1,2\}$
and the sequence
$$
A[i_1][b_1] < A[i_2][b_2] < \cdots < A[i_\ell][b_\ell]
$$
is the longest possible (every $A[i_k][b_k]$ must be defined). The values
$i_1, \ldots, i_\ell$ and $b_1, \ldots, b_\ell$ can be computed in
$\Oh{n_1\log n_1}$ time using a straightforward modification of the dynamic
programming algorithm for the longest increasing subsequence. Setting, for
$k=1,\ldots,\ell$, $j_k = A[i_k][b_k]$ we get that
$$
G \;= \; S_1[i_1] S_1[i_2] \cdots S_1[i_\ell] \; = \; S_2[j_1] S_2[j_2] \cdots S_2[j_\ell]
$$
is a common subsequence of $S_1$ and $S_2$.

\begin{lemma}\label{lemma:bwtinv}
The subsequence $G$ is BWT-invariant.
\end{lemma}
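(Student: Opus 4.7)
The plan is to translate BWT-invariance into a statement about positions in the joint suffix array $SA_{12}$ of $S_1\#S_2$, and then verify it by a short case analysis. For each $k$, let $p_k$ be the position of $S_1[i_k+1..n_1]$ in $SA_{12}$ and let $q_k$ be the position of $S_2[j_k+1..n_2]$. Because the suffixes of $S_1$ keep their mutual lexicographic order when mixed with the $S_2$-suffixes inside $SA_{12}$ (and likewise for the $S_2$-suffixes), the BWT positions $v_k$ are ranked exactly like the $p_k$'s and the $w_k$'s exactly like the $q_k$'s. Hence the permutation that sorts $p_1,\ldots,p_\ell$ increasingly also sorts $v_1,\ldots,v_\ell$ increasingly, and by \eqref{eq:bwtinv} it suffices to show that this same permutation sorts the $q_k$'s, equivalently that for all $a \neq b$,
\[
p_a < p_b \iff q_a < q_b.
\]

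I would then unfold the definition of $A$: when $b_k = 1$ the suffix $S_2[j_k+1..n_2]$ occupies exactly position $p_k + 1$ in $SA_{12}$, and when $b_k = 2$ it is the closest $S_2$-suffix appearing before $p_k$ in $SA_{12}$. Assume $p_a < p_b$ and split on $(b_a, b_b) \in \{1,2\}^2$. Two subcases are immediate. For $(2,1)$: $q_a < p_a < p_b < p_b + 1 = q_b$. For $(1,1)$: position $p_a+1$ holds an $S_2$-suffix while $p_b$ holds an $S_1$-suffix, so $p_a + 1 \neq p_b$, giving $q_a = p_a + 1 < p_b < q_b$.

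The main obstacle, and the only place where the strict inequality $A[i_1][b_1] < \cdots < A[i_\ell][b_\ell]$ --- equivalently $j_a < j_b$ --- is really used, will be the cases $(1,2)$ and $(2,2)$. In both, $q_b$ is only guaranteed to be the \emph{closest} $S_2$-suffix preceding $p_b$, not literally $p_b - 1$. I would exclude $q_a \geq q_b$ in two steps. Equality $q_a = q_b$ would force $j_a = j_b$, contradicting $j_a < j_b$. And $q_a > q_b$ would place the $S_2$-suffix $q_a$ strictly between $q_b$ and $p_b$: indeed $q_a < p_b$, because either $q_a = p_a + 1 \leq p_b - 1$ in case $(1,2)$ (using $p_a < p_b$ together with the type mismatch between $q_a$ and $p_b$), or $q_a < p_a < p_b$ in case $(2,2)$; this contradicts the defining maximality of $q_b$. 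Swapping $a$ and $b$ gives the reverse implication, establishing the biconditional and thereby producing the common permutation $\pi$ witnessing the BWT-invariance of $G$.
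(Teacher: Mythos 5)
Your proof is correct and follows essentially the same route as the paper's: reduce BWT-invariance to comparing the ranks of the suffixes $S_1[i_k+1..n_1]$ and $S_2[j_k+1..n_2]$ in $SA_{12}$, then do a case analysis on $(b_a,b_b)\in\{1,2\}^2$ using the ``immediately follows'' / ``largest preceding'' definitions of $A$ together with the distinctness of the $j_k$. The only difference is cosmetic (positions in $SA_{12}$ instead of $\prec$), and you work out all four cases where the paper presents only $(1,2)$ and leaves the rest to the reader.
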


\begin{proof}
Let $v_1, \ldots, v_\ell$ (resp. $w_1,\ldots,w_\ell$) such that for
$k=1,\ldots,\ell$, $\BWT(S_1)[v_k]$ is the character corresponding to
$S_1[i_k]$ (resp. $\BWT(S_2)[w_k]$ corresponds to $S_2[j_k]$). It suffices to
prove that for any pair $h,k$, with $1 \leq h,k \leq \ell$,  the inequality
$v_h < v_k$ implies $w_h < w_k$. Let $\prec$ denote the lexicographic order.
By construction, and by the properties of the BWT, we have $v_h < v_k$ iff
the suffix $S_1[i_h+1,n_1] \prec S_1[i_k+1,n_1]$ and we must prove that this
implies $S_2[j_h+1,n_2] \prec S_2[j_k+1,n_2]$.

Since $j_h = A[i_h][b_h]$ and $j_k = A[i_k][b_k]$, the proof follows
considering the four possible cases: $b_h =1,2$ and $b_k=1,2$. We consider
the case $b_h=1$, $b_k=2$ leaving the others to the reader. If $j_h =
A[i_h][1]$ and $j_k = A[i_k][2]$ then $S_2[j_h+1,n_2]$ immediately follows
$S_1[i_h+1,n_1]$ in $SA_{12}$. At same time $S_2[j_k+1,n_2]$ precedes
$S_1[i_h+1,n_1]$ but there are no other suffixes from $S_2$ between them.
Since $j_h \neq j_k$ the only possible ordering of the suffixes in $SA_{12}$
is
$$
S_1[i_h+1,n_1] \;\prec\; S_2[j_h+1,n_2] \;\prec\; S_2[j_k+1,n_2] \;\prec\; S_1[i_k+1,n_1]
$$
implying $S_2[j_h+1,n_2] \prec S_2[j_k+1,n_2]$ as claimed.\qed
\end{proof}

To evaluate whether the subsequence $G$ derived from the above procedure is
still able to capture the similarity between $S_1$ and $S_2$, we have
compared the length of $G$ with the \LCS\ length for pairs of {\it
S.{}cerevisiae\/} genomes from the Saccharomyces Genome Resequencing
Project.\footnote{https://www.sanger.ac.uk/research/
projects/genomeinformatics/sgrp.html} In particular we compared the 273614N
sequence with sequences 322134S, 378604X, BC187, and DBVPG1106. For each
sequence we report in Table~\ref{table:G} the ratio between the length of $G$
and $\LCS(\BWT(S_1),\BWT(S_2))$ and the length of sequence 273614N (roughly
11.9 MB). We see that in all cases more than 85\% of BWT positions are in $G$
which roughly indicates that more than 85\% of the SA samples from 273614N
could be reused as SA samples for the other sequences.

\begin{table}[t]
\centering \caption{Comparison between $|G|$ and $|\LCS|$. The normalizing
factor $n$ is the length of sequence 273614N.\label{table:G}}
\begin{tabular}{l@{\hspace{2ex}}|@{\hspace{3ex}}r@{\hspace{3ex}}r@{\hspace{3ex}}r@{\hspace{3ex}}r}
	&   322134S   & 378604X & BC187 & DBVPG1106 \\[.5ex]
\hline \\[-1.5ex]
$|\LCS|/n$  & 0.9341 & 0.9669 & 0.9521 & 0.9590\\[.5ex]
$|G|/n$     & 0.8694 & 0.8655 & 0.8798 & 0.8800\\[.5ex]
\end{tabular}
\end{table}

\section{Conclusions}\label{sec:concl}

% The LCS is a well known measure of similarity between strings,

In this paper we have considered the problem of building an index for a
string $S_2$ given an FM-index for a similar string $S_1$. We have shown how
to build such a ``relative'' index using space bounded by the BW-distance
between $S_1$ and $S_2$. The BW-distance is simply the edit distance between
$\BWT(S_1)$ and $\BWT(S_2)$ when only insertions and deletions are allowed.
We have also introduced the notion of BWT-invariant subsequence and shown
that it can be used to determine a set of $S_1$ suffix array samples that can
be easily ``reused'' for an index for $S_2$.

We have tested our approach by building a relative index for a Han Chinese
individual with respect to an FM-index of the human reference genome. We
leave as a future work the development of these ideas and the complete
implementation of a relative FM-index supporting locating and extracting.  We
also leave as future work proving bounds on the BW-distance and the length of
the longest BWT-invariant subsequence in terms of the edit distance of the
strings.

\bibliographystyle{plain}
\bibliography{relative}

\begin{thebibliography}{10}

\bibitem{BBL98}
P.~Bose, J.~F. Buss, and A.~Lubiw.
\newblock Pattern matching for permutations.
\newblock {\em Inf. Process. Lett.}, 65(5):277--283, 1998.

\bibitem{BW94}
M.~Burrows and D.~J. Wheeler.
\newblock A block sorting lossless data compression algorithm.
\newblock Technical Report 124, Digital Equipment Corporation, 1994.

\bibitem{FGHP14}
H.~Ferrada, T.~Gagie, T.~Hirvola, and S.~J. Puglisi.
\newblock Hybrid indexes for repetitive datasets.
\newblock {\em Phil. Trans. Royal Society A}, 372(2016), 2014.

\bibitem{FM05}
P.~Ferragina and G.~Manzini.
\newblock Indexing compressed text.
\newblock {\em Journal of the ACM}, 52(4):552--581, 2005.

\bibitem{Gog2014b}
S.~Gog, T.~Beller, A.~Moffat, and M.~Petri.
\newblock From theory to practice: Plug and play with succinct data structures.
\newblock In {\em Proc. 13th International Symposium on Experimental Algorithms
  (SEA 2014)}, 2014.
\newblock To appear.

\bibitem{Kaerkkaeinen2014}
J.~K{\"a}rkk{\"a}inen, D.~Kempa, and S.~J. Puglisi.
\newblock Hybrid compression of bitvectors for the {FM}-index.
\newblock In {\em Proc. 2014 IEEE Data Compression Conference (DCC 2014)},
  2014.
\newblock To appear.

\bibitem{LandauVN86}
G.~M. Landau, U.~Vishkin, and R.~Nussinov.
\newblock An efficient string matching algorithm with k differences for
  nucleotide and amino acid sequences.
\newblock {\em Nucleic Acids Research}, 14(1):31--46, 1986.

\bibitem{LTPS09}
B.~Langmead, C.~Trapnell, M.~Pop, and S.~L. Salzberg.
\newblock Ultrafast and memory-efficient alignment of short {DNA} sequences to
  the human genome.
\newblock {\em Genome Biology}, 10:R25, 2009.

\bibitem{LD09}
H.~Li and R.~Durbin.
\newblock Fast and accurate short read alignment with {Burrows-Wheeler}
  transform.
\newblock {\em Bioinformatics}, 25(14):1754--1760, 2009.

\bibitem{LYLLYKW09}
R.~Li, C.~Yu, Y.~Li, T.-W. Lam, S.-M. Yiu, K.~Kristiansen, and J.~Wang.
\newblock {SOAP2}: an improved ultrafast tool for short read alignment.
\newblock {\em Bioinformatics}, 25(15):1966--1967, 2009.

\bibitem{MNSV10}
V.~M{\"a}kinen, G.~Navarro, J.~Sir{\'e}n, and N.~V{\"a}lim{\"a}ki.
\newblock Storage and retrieval of highly repetitive sequence collections.
\newblock {\em Journal of Computational Biology}, 17(3):281--308, 2010.

\bibitem{Myers99}
E.~W. Myers.
\newblock A fast bit-vector algorithm for approximate string matching based on
  dynamic programming.
\newblock {\em Journal of the ACM}, 46(3):395--415, 1999.

\bibitem{Myers86}
Eugene~W. Myers.
\newblock An {O(ND)} difference algorithm and its variations.
\newblock {\em Algorithmica}, 1(2):251--266, 1986.

\bibitem{Okanohara2007}
D.~Okanohara and K.~Sadakane.
\newblock Practical entropy-compressed rank/select dictionary.
\newblock In {\em Proc. Ninth Workshop on Algorithm Engineering and Experiments
  (ALENEX 2007)}, pages 60--70. SIAM, 2007.

\bibitem{Raman2007}
R.~Raman, V.~Raman, and S.~Rao Satti.
\newblock Succinct indexable dictionaries with applications to encoding $k$-ary
  trees, prefix sums and multisets.
\newblock {\em ACM Transactions on Algorithms}, 3(4):43, 2007.

\end{thebibliography}

\appendix

\section*{Appendix: Reusing an SA Sample} \label{sec:sample}

Consider the example strings $S_1$, $S_2$ given in the introduction.  The characters of
\(\BWT (S_1) [1..16]\) and \(\BWT (S_2) [1..15]\) are mapped to their
positions by the BWT from
\begin{align*}
& S_1 [16, 2, 6, 8, 13, 1, 12, 3, 7, 9, 14, 10, 15, 5, 11, 4]\\
& S_2 [15, 7, 2, 5, 12, 1, 11, 8, 3, 6, 13, 9, 14, 4, 10]
\end{align*}
respectively.  (Notice the lists of indices are just the SAs of $S_1\$$ and
$S_2\$$ with each value decremented.)  Therefore, if \(r = 3\) then
$$
R = 1000110010010001, \qquad
A [1..6] = [16, 13, 1, 7, 10, 4]\,.
$$
Comparing $R$ and
\(B_1 = 0001000000000111\)
we see that the sampled characters \(\BWT (S_1) [1, 5, 6, 9, 12]\) that are
in $C$, are $C$'s 1st, 4th, 5th, 8th and 11th characters.  From
\(B_2 = 010000000001010\)
we see that the 1st, 4th, 5th, 8th and 11th characters in $C$ in \(\BWT
(S_2)\) are \(\BWT (S_2) [1, 5, 6, 9, 13]\), which are mapped to their
positions by the BWT from \(S_2 [15, 12, 1, 3, 14]\).

The relative order \(5, 3, 1, 2, 4\) of the positions \(15, 12, 1, 3, 14\) in
$S_2$ of these characters, is {\em almost} the same as the relative order
\(5, 4, 1, 2. 3\) of the positions \(16, 13, 1, 7, 10\) in $S_1$ of the
sampled characters in \(\BWT (S_1)\) that are in $C$, which seems promising.
What if we choose $C$ and its occurrences in \(\BWT (S_1)\) and \(\BWT
(S_2)\) such that the relative order in $S_1$ of all \(\BWT (S_1)\)'s
characters that are in $C$, is the same as the relative order in $S_2$ of all
\(\BWT (S_2)\)'s characters that are in $C$?

For example, we can choose instead
\begin{align*}
C' & = \mathsf{TCTCGTAAAGG}\\
B_1' & = 0001000001010101 & B_2' & = 010000010001010\\
D_1' & = \mathsf{GAGTC} & D_2' & = \mathsf{GACC}
\end{align*}
even though $C'$ is not then an LCS of \(\BWT (S_1)\) and \(\BWT (S_2)\) and,
thus, our data structures for supporting \rank\ in \(\BWT (S_2)\) are
slightly larger.  With these choices, the characters in \(\BWT (S_1)\) and
\(\BWT (S_2)\) that are in $C'$, are mapped to their positions by the BWT
from
$$
S_1 [16, 2, 6, 13, 1, 12, 3, 7, 14, 15, 11],\qquad
S_2 [15, 2, 5, 12, 1, 11, 3, 6, 13, 14, 10]
$$
and the relative order \(11, 2, 4, 8, 1, 7, 3, 5, 9, 10, 6\) of the indices
in those two lists is the same, as desired.

Suppose we store yet another pair of bitvectors
$$
M_1 = 0001100111000000,\qquad
M_2 = 000100111000000
$$
with 1s marking the positions in $S_1$ and $S_2$ of characters that are not
mapped into $C'$ in \(\BWT (S_1)\) and \(\BWT (S_2)\).  We claim that if we
can support fast \rank\ queries on $B_2'$, $R$ and $M_1$, fast access to $A$
and fast $\select_0$ queries on $B_1'$ and $M_2$, then we can support fast
access to a (possibly irregular) sample SA sample for $S_2$ with as many
sampled characters as there are in $C'$ in \(\BWT (S_1)\).  More
specifically, if \(\BWT (S_2) [i]\) is in $C'$ and
\(R [B_1'.\select_0 (B_2'.\rank_0 (i))] = 1\)
--- meaning the corresponding character in $C'$ in \(\BWT (S_1)\) is sampled --- then \(\BWT (S_2) [i]\) is mapped to its position by the BWT from
\[S_2 \left[ \rule{0ex}{4ex}
    M_2.\select_0 \left( \rule{0ex}{3.5ex}
      M_1.\rank_0 \left( \rule{0ex}{3ex}
        A \left[ \rule{0ex}{2.5ex}
          R.\rank_1 \left( \rule{0ex}{2ex}
            B_1'.\select_0 \left(
              B_2'.\rank_0 (i)
            \right)
          \right)
        \right]
      \right)
    \right)
  \right]\,.\]
We leave a detailed explanation to the full version of this paper.  We note,
however, that this approach works for any sample rate $r$, and even if the SA
sample for $S_1$ is irregular itself.

In our example, since \(\BWT (S_2) [10]\) is in $C'$, \(B_1'.\select_0
(B_2'.\rank_0 (10)) = 9\) and \(R [9] = 1\), we know \(\BWT (S_2) [10]\) is
mapped to its position by the BWT from position
\(M_2.\select_0 \left(
    M_1.\rank_0 \left( \rule{0ex}{2ex}
      A [R.\rank_1 (9)]
    \right)
  \right)
= 6\)
in \(S_2\).

\end{document}